\newtheorem{theorem}{Theorem}
\newtheorem{corollary}{Corollary}
\newtheorem{definition}{Definition}
\newtheorem{assumption}{Assumption}
\newtheorem{problem}{Problem}
\newtheorem{remark}{Remark}
\newtheorem{example}{Example}
\newtheorem{algo}{Algorithm}
\newtheorem{proof of theorem}{Proof of Theorem}
\newcommand{\norm}[1]{\left\lVert{#1}\right\rVert}
\newcommand{\abs}[1]{\left\lvert{#1}\right\rvert}
\newcommand{\pmat}[1]{\begin{pmatrix}#1\end{pmatrix}}
\renewcommand{\geq}{\geqslant}
\renewcommand{\leq}{\leqslant}
\newcommand{\R}{\mathbb{R}}
\newcommand{\N}{\mathbb{N}}
\renewcommand{\P}{\mathcal{P}}
\newcommand{\is}{i_s}
\newcommand{\iu}{i_u}
\newcommand{\PP}{\mathbb{P}}
\title{\LARGE \bf
A probabilistic algorithm for scheduling networked control systems under data losses
}
\author{Anubhab Dasgupta, Darsana Udayakumar, Atreyee Kundu
\thanks{Anubhab Dasgupta is with the Department of Mechanical Engineering, Indian Institute of Technology Kharagpur, West Bengal - 721302, India. Darsana Udayakumar and Atreyee Kundu are with the Department of Electrical Engineering, Indian Institute of Technology Kharagpur, West Bengal - 721302, India. Emails: \{anubhab.dasgupta,darsana\_udayakumar\}@kgpian.iitkgp.ac.in, atreyee@ee.iitkgp.ac.in}
}
\begin{document}

\maketitle
\thispagestyle{empty}
\pagestyle{empty}

\begin{abstract}
    This paper deals with the design of scheduling logics for networked control systems (NCSs) whose communication networks have limited capacity and are prone to data losses. Our contributions are twofold. First, we present a probabilistic algorithm to generate a scheduling logic that under certain conditions on the plant and the controller dynamics, the capacity of the network and the probability of data losses, ensures stochastic stability of each plant in the NCS. Second, given the plant dynamics, the capacity of the shared communication network and the probability of data losses, we discuss the design of state-feedback controllers such that our stability conditions are obeyed. Numerical examples are presented to demonstrate the results reported in this paper.
\end{abstract}

\section{Introduction}
\label{s:intro}
        \emph{Networked Control Systems} (NCSs) are spatially distributed control systems in which the communication between plants and their controllers occurs through shared communication networks. NCSs find wide applications in sensor networks, remote surgery, haptics collaboration over the internet, automated highway systems, unmanned aerial vehicles, etc. \cite{Hespanha2007}.
        
         In NCSs the number of plants sharing a communication network is often higher than the capacity of the network. Examples of communication networks with limited bandwidth include wireless networks, which is an important component of smart home, smart transportation, smart city, remote surgery, platoons of autonomous vehicles, etc. and underwater acoustic communication systems \cite{abc}. In addition, the network may undergo communication uncertainties, e.g., data losses. Data loss is a common feature for noisy communication networks. For instance, in cloud-aided vehicle control systems, the control values are computed remotely and transmitted to the vehicles over noisy wireless networks. The interference and fading effects in the noisy network often lead to data losses \cite{Mishra2018}. These scenarios motivate the need to allocate the communication network to each plant in a manner so that good qualitative and quantitative properties of the plants are preserved under communication uncertainties. The task of efficient allocation of a shared communication network is commonly referred to as a \emph{scheduling problem} and the corresponding allocation scheme is called a \emph{scheduling logic}. In this paper we are interested in algorithmic design of scheduling logics for NCSs.     

        Scheduling logics can be classified broadly into two categories: \emph{static} and \emph{dynamic}. In case of the former, a finite length allocation scheme of the network is determined offline and is applied eternally in a periodic manner, while in case of the latter, the allocation of the shared network is determined based on some information about the plant (e.g., states, outputs, access status of sensors and actuators, etc.).  
        
        Static scheduling logics are designed using common Lyapunov functions in \cite{Hristu2001}, piecewise Lyapunov-like functions with average dwell time switching in \cite{Lin2005}, combinatorial optimization with periodic control theory in \cite{Rehbinder2004}, Linear Matrix Inequalities (LMIs) optimization with average dwell time technique in \cite{Dai2010}, multiple Lyapunov-like functions and graph theory in \cite{def}, and stochastic Lyapunov functions in \cite{ghi}. Dynamic scheduling logics are designed using event-triggering in \cite{Al-Areqi'15}, optimal node minimization in \cite{Quevedo2014}, distributed control-aware random network access logics in \cite{Gatsis2016} and linear programming techniques in \cite{Ma2019}.
        
       Recently in \cite{abc} the authors consider a shift in paradigm and propose a probabilistic algorithm to design scheduling logics. At every instant of time, this algorithm allocates the shared network to subsets of the plants with certain probabilities. The authors present necessary and sufficient conditions on the plant dynamics and the capacity of the shared network under which a scheduling logic obtained from this algorithm ensures stochastic stability of each plant in the NCS. Given the plant dynamics and capacity of the shared network, they also present an algorithm to design static state-feedback controllers such that the plants, their controllers and the shared network together satisfy the proposed stability conditions. The stability results of \cite{abc}, however, suffer from two classes of limitations:
       \begin{itemize}[label=\(\circ\),leftmargin=*]
            \item The communication network has a limited bandwidth but does not suffer from communication uncertainties such as delays, data losses, etc. In practice, ideal communication is rare.
            \item The capacity of the communication network is such that the total number of plants is divisible by the communication capacity. This restricts the class of NCSs under consideration.
       \end{itemize}
       In this paper we report the applicability of the probabilistic scheduling algorithm presented in \cite{abc} to a broader premise in NCS literature. In particular, we cater to
       \begin{itemize}[label=\(\circ\),leftmargin=*]
            \item NCSs whose shared communication networks have a limited capacity and are prone to data losses. In particular, we consider a probabilistic data loss model.
            \item NCSs with no restriction on the number of plants and the capacity of the communication network. In particular, the total number of plants is not necessarily divisible by the capacity of the communication network.
       \end{itemize}
       
       We work on the following premise: We consider an NCS consisting of multiple discrete-time linear plants whose feedback loops are closed through a shared communication network. We assume that the plants are unstable in open-loop and the controllers are stabilizing. Due to a limited communication capacity of the network, only a few plants can exchange information with their controllers at any instant of time. Consequently, the remaining plants operate in open-loop at every time instant. Additionally, the plants communicating with their controllers at an instant of time may operate in open-loop as the control input may be lost in the network. Our specific contributions are the following: 
       \begin{itemize}[label = \(\circ\), leftmargin = *]
            \item First, we consider the probabilistic scheduling algorithm of \cite{abc} and provide necessary and sufficient conditions on the plant and the controller dynamics, the capacity of the network and the probability of data losses, such that a scheduling logic obtained from this algorithm ensures stochastic stability of each plant in the NCS under data losses. 
            \item Second, given the plant dynamics, the capacity of the shared communication network and the probability of data losses, we discuss the design of state-feedback controllers such that our stability conditions are satisfied. 
       \end{itemize}
        A generalized version of the stability condition presented in \cite{abc} falls as a special case of our results. Similar to \cite{abc}, the key tool of our analysis is a Markovian jump linear system modelling of the plants in the NCS. Numerical examples are presented to demonstrate our results.
       
       The remainder of this paper is organized as follows: In Section \ref{s:prob_stat} we formulate the problem under
consideration. Our results are presented in Section \ref{s:mainres}. Numerical experiments are discussed in Section \ref{s:numex}. We conclude in Section \ref{s:concln} with a brief discussion on future research direction.

        {\bf Notation}. \(\R\) is the set of real numbers and \(\N\) is the set of natural numbers, \(\N_0 = \N\cup\{0\}\). For two scalars \(a\) and \(b\), \(a\%b\) denotes the remainder of the operation \(a/b\). For a finite set \(C\), its cardinality is denoted by \(\abs{C}\). For a vector \(v\), \(\norm{v}\) denotes its Euclidean norm. \(0_{d}\) denotes \(d\)-dimensional \(0\)-vector and \(0_{d\times d}\) denotes \(d\)-dimensional \(0\)-matrix. We will operate in a probabilistic space \((\Omega,\mathcal{F},\PP)\), where \(\Omega\) is the sample space, \(\mathcal{F}\) is the \(\sigma\)-algebra of events, and \(\PP\) is the probability measure.
\section{Problem statement}
\label{s:prob_stat}
    Consider an NCS with $ N $ plants whose dynamics are given by
\begin{align}
		\label{e:plants}
	x_{i}(t+1)=A_{i} x_{i}(t) + B_{i} u_{i}(t),\:x_{i}(0)= x_{i}^0,\:t\in\mathbb{N}_{0} ,
\end{align} 
where $ x_{i}(t) \in\mathbb{R}^{d_{i}} $ and $ u_{i}(t) \in \mathbb{R}^{m_{i}} $ are the vectors of states and inputs of the $ i $-th plant at time $ t $, respectively, $ i=1,2, \ldots, N $. Each plant \(i\) employs a remotely located state-feedback controller $ u_{i}(t)= K_{i} x_{i}(t),\ t\in\mathbb{N}_{0} $. We have that $ A_i\in\mathbb{R}^{{d_{i}}\times {d_{i}}} $, $ B_i\in\mathbb{R}^{{d_{i}}\times {m_{i}}} $ and $ K_i\in\mathbb{R}^{{m_{i}}\times {d_{i}}} $, $ i=1,2, \ldots, N $, are constants.

The plants communicate with their controllers through a shared communication network. The network has a limited communication capacity in the following sense: at any time instant, at most $ M $ plants $ (0 < M < N) $ can access the network. Consequently, the remaining plants operate in open-loop, i.e., with $ u_{i}(t)=0_{m_{i}} $. In addition, the shared network suffers from communication uncertainty of the following form: the communication channels from the controllers to the plants are prone to data losses. In particular, at any time instant, the control input is lost in a channel with a probability $ q $. The plant accessing this channel at that time instant also operates in open-loop, i.e., with $ u_{i}(t)=0_{m_{i}} $.

Notice that in our setting each plant operates in two modes:

(a) closed-loop operation (i.e., with $ u_{i}(t)= K_{i} x_{i}(t) $) when the plant has access to the shared communication network and its control input is received in good order, and

(b) open-loop operation (i.e., with $ u_{i}(t)=0_{m_{i}} $) when the plant does not have access to the shared communication network, or it has access to the shared network but its control input is lost in the channel. 
\begin{assumption}
	\label{a:stab_unstab}
	\rm{
		For each plant in the NCS, the open-loop dynamics is unstable and the controller gain is stabilizing. More specifically, the matrices $ A_{i},\ i =1,2, \ldots, N  $ are unstable and the matrices $ A_{i}+ B_{i} K_{i},\ i =1,2,\ldots, N $ are Schur.\footnote{{A matrix $ A\in \mathbb R^{d\times d} $ is Schur if all its eigenvalues are inside the open unit disk and is non-Schur otherwise.}
}
}
\end{assumption}

It follows that the closed-loop operation of each plant is stable while the open-loop operation of each plant is unstable. We let $ i_{s} $ and $ i_{u} $ denote the stable and unstable mode of operation of plant $ i $ respectively, $A_{i_{s}}=A_{i}+ B_{i} K_{i} $, $ A_{i_{u}}=A_{i}, i=1,2, \ldots, N $.

Let $ \kappa_{m} : \mathbb{N}_{0} \rightarrow \{0,1\} $ denote the data loss signal at the $ m $-th channel of the communication network, $ m=1,2, \ldots, M $. If $ \kappa_{m}(t)=0 $, then there is no data loss in the $m$-th channel at time $ t $, and if $ \kappa_{m}(t)=1 $, then there is a data loss in the $ m $-th channel at time $ t $. We call a data loss signal $ \textit{admissible} $ if it satisfies
\begin{align*}
 \kappa_{m}(t)= 
\begin{cases}
1, \:&\text{with probability}\  q,\\ 
0, \:&\text{with probability}\  1-q,\\
\end{cases} 
t\in\mathbb{N}_{0}, 
\end{align*}
\(m=1,2,\ldots, M\).
Let $ \mathcal S $ be the set of all distinct subsets of $ \{1,2, \ldots, N \} $ with at most $ M $ elements. We call a function $ \gamma : \mathbb{N}_{0} \rightarrow \mathcal S $, that specifies, at every time $ t $, at most $ M $ plants of the NCS which has access to the shared communication network at that time, as a $ \textit{scheduling logic} $.

We let $ r_{i}^{0} $ denote the initial mode of operation of plant $i$,\\ i.e.,
$ r_{i}^{0} = i_{s} $ , if $ i \in \gamma(0) $ and the communication channel accessed by plant $ i $ does not suffer from data loss at time $ 0 $, and
$ r_{i}^{0} = i_{u} $ , if $ i \notin \gamma(0) $ or, $ i \in \gamma(0) $ but there is a data loss in the communication channel accessed by the plant $ i $ at time $ 0 $.

We are interested in stochastic stability of all plants in the NCS.

\begin{definition} 
	\label{d:stoch_stab}
	\rm{
		The $ i $-th plant in \eqref{e:plants} is \textit{stochastically stable} if for every initial condition $ x_{i}^{0} \in \mathbb{R}^{d_{i}} $ and initial mode of operation $ r_{i}^{0} \in \{i_{s},i_{u}\} $, we have that the following condition holds :
		\begin{equation}\label{def1}
			\mathbb{E}\left\{\sum\limits_{t=0}^{+ \infty} \lVert {x_{i}(t)} \rVert ^2 \ | \ x_{i}^0, r_{i}^0\right\} < +\infty. 
		\end{equation}
	}
\end{definition}

    We will first solve the following problem :
\begin{problem}
\label{prob:main1}
\rm{
Design a scheduling logic, $ \gamma $, that preserves stochastic stability of each plant $ i= 1,2,\ldots, N $ under admissible data loss signals, $ \kappa_{m}, m=1,2, \ldots, M $.
}
\end{problem}	

    Towards solving Problem \ref{prob:main1}, we rely on the probabilistic scheduling algorithm proposed in \cite{abc} and provide sufficient conditions on the matrices $ A_{i_{s}}, A_{i_{u}}, i=1,2, \ldots, N $, the capacity of the communication network, $ M $, and the probability of data loss, $ q $, such that a scheduling logic, \(\gamma\), obtained from this algorithm ensures stochastic stability of each plant in the NCS. 

    We will then focus on the design of stabilizing controllers such that our stability conditions are satisfied. In particular, we are interested in the following problem:
\begin{problem}
	\label{prob:main2}
	\rm{
Given the plant dynamics, $(A_{i},B_{i})$, $ i=1,2, \ldots, N $, the capacity 
of the communication network, $ M $, and the probability of data loss, $ q $, design state-feedback controllers, $ K_{i}, i=1,2, \ldots, N $, such that the conditions for stability under our scheduling logic are satisfied.
	}
\end{problem}
    Towards solving Problem \ref{prob:main2}, we rely on matrix inequalities involving the matrices, \((A_i,B_i)\), \(i=1,2,\ldots,N\), and certain other quantities involved in our scheduling algorithm.
    
    Finally, we will identify a recent result on scheduling under limited bandwidth but ideal communication \cite{abc} as a special case of the results presented in this paper.
    
    We now present our results.

\section{Main results}
\label{s:mainres}
    We begin with our solution to Problem \ref{prob:main1}. 
    
   Consider sets \(s_j\subseteq\{1,2,\ldots,N\}\), \(j=1,2,\ldots,v\) and scalars \(p_j\in]0,1[\), \(j=1,2,\ldots,v\) such that the following conditions are satisfied:\footnote{An estimate of \(v\) will be discussed momentarily.}
        \begin{enumerate}[label= C\arabic*), leftmargin = *]
            \item\label{c1} \(\abs{s_j}\leq M\) for all \(j=1,2,\ldots,v\),
            \item\label{c2} \(s_j\cap s_k = \emptyset\) for all \(j,k=1,2,\ldots,v\), \(j\neq k\),
            \item\label{c3} \(\displaystyle{\bigcup_{j=1}^{v}s_j = \{1,2,\ldots,N\}}\), and
            \item\label{c4} \(\displaystyle{\sum_{j=1}^{v}p_j = 1}\).
        \end{enumerate}
    
    The following probabilistic algorithm employs \(s_j\subseteq\{1,2,\ldots,N\}\), \(j=1,2,\ldots,v\) and \(p_j\in]0,1[\), \(j=1,2,\ldots,v\) to generate scheduling logics, \(\gamma\). 
    \begin{algo}
        \label{algo:sched_design}
            {\it Design of scheduling logic, \(\gamma\)}\\
            \hspace*{1cm}For \(t=0,1,2,\ldots\)\\
            \hspace*{1.5cm} Set \(\gamma(t)=s_j\) with probability \(p_j\).\\
            \hspace*{1cm}End For
        \end{algo}
        
        Algorithm \ref{t:mainres1} relies on a set of sets, \(s_j\subseteq\{1,2,\ldots,N\}\), \(j=1,2,\ldots,v\) and a set of scalars, \(p_j\in]0,1[\), \(j=1,2,\ldots,v\) that satisfy conditions \ref{c1}-\ref{c4}. Conditions \ref{c1}-\ref{c3} govern properties of the sets \(s_j\), \(j=1,2,\ldots,v\). In particular, condition \ref{c1} ensures that each \(s_j\), \(j=1,2,\ldots,v\) has at most \(M\) elements, condition \ref{c2} ensures that the sets \(s_j\), \(j=1,2,\ldots,v\) are pairwise disjoint, and condition \ref{c3} ensures that the sets \(s_j\), \(j=1,2,\ldots,v\) together contain all elements of the set \(\{1,2,\ldots,N\}\). Clearly, we need \(\bigl\lceil\frac{N}{M}\bigr\rceil\leq v\leq N\). Condition \ref{c4} governs properties of the scalars \(p_j\in]0,1[\), \(j=1,2,\ldots,v\). In particular, it ensures that these scalars sum up to \(1\).  
        
        We employ the sets \(s_j\), \(j=1,2,\ldots,v\) as the subsets of the set of all plants that are to be allocated access to the shared communication network. It is immediate that \(s_j\), \(j=1,2,\ldots,v\) are elements of the set \(\mathcal{S}\). Conditions \ref{c1}-\ref{c3} together ensure that each \(s_j\), \(j\in\{1,2,\ldots,v\}\) has at most \(M\)-many plants and each plant \(i\in\{1,2,\ldots,N\}\) appears in exactly one \(s_j\), \(j\in\{1,2,\ldots,v\}\). The size of \(s_j\), \(j=1,2,\ldots,v\) is governed by the capacity of the communication network. Selection of each plant in at least one \(s_j\), \(j\in\{1,2,\ldots,v\}\) is necessary since by Assumption \ref{a:stab_unstab}, each plant is unstable in open-loop and requires access to the shared network for communicating with its stabilizing controller. Selection of each plant in exactly one \(s_j\), \(j\in\{1,2,\ldots,v\}\) is specific to our solution setting and will be employed in the proof of our stability result. We employ the scalars \(p_j\), \(j=1,2,\ldots,v\) as the probability of selecting a set \(s_j\), \(j=1,2,\ldots,v\) to be assigned to \(\gamma\) at any time instant. More specifically, consider that the sets \(s_j\), \(j=1,2,\ldots,v\) and the scalars \(p_j\), \(j=1,2,\ldots,v\) are fixed. In Algorithm \ref{algo:sched_design}, we design a scheduling logic, \(\gamma\), as follows: at each time instant \(t=0,1,2,\ldots\), we allocate the shared communication network to the plants in \(s_j\), \(j\in\{1,2,\ldots,v\}\) with a probability \(p_j\), \(j\in\{1,2,\ldots,v\}\).
        
        \begin{remark}
        \label{rem:compa1}
            Earlier in \cite{abc} the notion of probabilistic allocation of the shared communication network to various subsets of the set of all plants was used. Algorithm \ref{algo:sched_design} differs from \cite[Algorithm 1]{abc} in the choice of subsets of the set of all plants. In \cite{abc} the authors consider that the capacity of the communication network, \(M\), divides the total number of plants, \(N\), i.e., \(N\% M = 0\) and each \(s_j\) contains exactly \(M\) plants, \(j=1,2,\ldots,\frac{N}{M}\). Our choice of \(s_j\), \(j=1,2,\ldots,v\) does not require \(N\% M = 0\) and thus caters to a larger class of NCSs. A more important difference between Algorithm \ref{algo:sched_design} and \cite[Algorithm 1]{abc} lies in the premise of scheduling. While in \cite{abc} analysis of a scheduling logic, \(\gamma\), obtained from a probabilistic algorithm is performed for NCSs whose shared communication networks have a limited bandwidth but ideal communication, in the current paper we deal with limited bandwidth \emph{and} data losses. We will identify a generalized version of the stability conditions of \cite{abc} as a special case of the results presented here.
        \end{remark}
        
         We now provide sufficient conditions on the plant dynamics, the capacity of the communication network and the probability of data loss under which a scheduling logic, \(\gamma\), obtained from Algorithm \ref{algo:sched_design} ensures stochastic stability of each plant in the NCS. 
     \begin{theorem}
    \label{t:mainres1}
        Consider an NCS described in Section \ref{s:prob_stat}. Let the plant dynamics, \((A_i,B_i)\), \(i=1,2,\ldots,N\), the controller dynamics, \(K_i\), \(i=1,2,\ldots,N\), the capacity of the communication network, \(M\), and the probability of data loss, \(p\), be given. Each plant \(i=1,2,\ldots,N\) is stochastically stable under a scheduling logic, \(\gamma\), obtained from Algorithm \ref{algo:sched_design} if and only if for each \(i\in s_j\), \(j=1,2,\ldots,v\), there exist symmetric and positive definite matrices, \(P_k\in\R^{d_i\times d_i}\), \(k=i_s,i_u\), such that 
                \begin{align}
                \label{e:maincondn1}
                    A_k^\top \P^{i}A_k - P_k \prec 0_{d_i\times d_i},
                \end{align}    
                where \(\P^i = p_j(1-q)P_{i_s}+\bigl((1-p_j)+p_jq\bigr)P_{i_u}\).
    \end{theorem}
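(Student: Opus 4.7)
The plan is to analyse each plant individually by recognising it as an i.i.d.\ switched linear system (equivalently, a Markovian jump linear system with constant-row transition kernel), and then to apply the classical coupled Lyapunov characterisation of mean-square stability. Conditions \ref{c2} and \ref{c3} guarantee that every plant index $i$ belongs to exactly one $s_j$, which I fix for the remainder of the argument. Because Algorithm \ref{algo:sched_design} draws $\gamma(t)$ i.i.d.\ across $t$ and the admissible drop-out signals $\kappa_m$ are independent Bernoulli$(q)$ across $t$ and $m$, the mode process $\{r_i(t)\}_{t\ge 0}\subseteq\{i_s,i_u\}$ is i.i.d.\ with
\[
\PP(r_i(t)=i_s)=p_j(1-q),\qquad \PP(r_i(t)=i_u)=(1-p_j)+p_j q,
\]
and plant $i$ evolves according to the switched recursion $x_i(t+1)=A_{r_i(t)}x_i(t)$. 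In this setting stochastic stability in the sense of Definition \ref{d:stoch_stab} coincides with mean-square stability of the resulting MJLS.

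For the sufficiency direction I would use the stochastic Lyapunov candidate $V_i(x,k)\Let x^\top P_k x$ with $P_{i_s},P_{i_u}$ supplied by \eqref{e:maincondn1}. Independence of $r_i(t+1)$ from the history up to time $t$ yields
\[
\mathbb{E}\bigl[V_i(x_i(t+1),r_i(t+1))\mid x_i(t),r_i(t)=k\bigr]=x_i(t)^\top A_k^\top \P^i A_k\, x_i(t),
\]
and \eqref{e:maincondn1} furnishes a uniform $\alpha>0$ such that $A_k^\top \P^i A_k - P_k\preceq -\alpha I_{d_i}$ for both $k\in\{i_s,i_u\}$. Taking total expectations, telescoping over $t=0,1,\ldots,T$, and sending $T\to\infty$ would bound $\alpha\,\mathbb{E}\bigl[\sum_{t\ge 0}\norm{x_i(t)}^2\mid x_i^0,r_i^0\bigr]$ by $V_i(x_i^0,r_i^0)$, which is exactly \eqref{def1}.

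For necessity I would lean on the classical MJLS converse. Mean-square stability of the i.i.d.\ switched recursion is equivalent to the spectral radius of the linear operator $\mathcal{T}(P)_k\Let A_k^\top\bigl(p_j(1-q)P_{i_s}+((1-p_j)+p_jq)P_{i_u}\bigr)A_k$, acting on pairs $P=(P_{i_s},P_{i_u})$ of symmetric matrices, being strictly less than one. Consequently, the coupled linear equation $P_k-A_k^\top\P^i A_k=I_{d_i}$ admits unique symmetric positive-definite solutions $(P_{i_s},P_{i_u})$, which by construction fulfil the strict inequality \eqref{e:maincondn1}.

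The main obstacle lies in this converse step: one has to translate finiteness of the expected energy from Definition \ref{d:stoch_stab} into a spectral-radius bound on $\mathcal{T}$, and then unwind that back into strict coupled-Lyapunov form. Assumption \ref{a:stab_unstab} together with $p_j\in\,]0,1[$ and $q\in[0,1[$ play the essential role by forcing $p_j(1-q)>0$, so that the stabilising mode $i_s$ is visited with positive frequency; otherwise $\mathcal{T}$ could not be a contraction, given that $A_{i_u}$ is non-Schur.
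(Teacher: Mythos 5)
Your proposal matches the paper's argument: the paper likewise fixes the unique $s_j$ containing plant $i$, models the plant as a Markovian jump linear system whose mode process has the constant-row transition kernel $\PP(i_s)=p_j(1-q)$, $\PP(i_u)=(1-p_j)+p_jq$ (i.e., an i.i.d.\ switch, exactly your observation), and then invokes the off-the-shelf coupled-Lyapunov characterisation of stochastic stability for MJLS via the proof of \cite[Theorem 1]{abc}. You have simply written out the telescoping sufficiency argument and the spectral-radius converse that the paper delegates to that citation, so your proof is correct and essentially the same as the paper's.
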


    Algorithm \ref{algo:sched_design} and Theorem \ref{t:mainres1} together form our solution to Problem \ref{prob:main1}. Condition \eqref{e:maincondn1} involves properties of both the sets \(s_j\), \(j=1,2,\ldots,v\) and the scalars \(p_j\), \(j=1,2,\ldots,v\), and the data loss probability, \(q\). It ensures that the dynamics corresponding to the indices picked in \(s_j\), \(j=1,2,\ldots,v\), the scalars \(p_j\), \(j=1,2,\ldots,v\) and the scalar \(q\), together satisfy an inequality involving the existence of certain positive definite matrices, \(P_{i_s}, P_{i_u}\), \(i\in s_j\), \(j=1,2,\ldots,v\). The choice of \(s_j\), \(j=1,2,\ldots,v\) depends on the capacity of the communication network. Thus, our sufficient condition for stability under \(\gamma\) obtained from Algorithm \ref{algo:sched_design}, involves the plant dynamics, the capacity of the communication network and the probability of data loss.
        
        \begin{remark}
        \label{rem:compa2}
            In \cite[Theorem 1]{abc} necessary and sufficient conditions for stochastic stability of each plant \(i=1,2,\ldots,N\) under a scheduling logic, \(\gamma\), obtained from a probabilistic algorithm was presented. The said result is specific to the setting where the shared communication network has a limited communication capacity but does not suffer from communication uncertainties like delays, data losses, etc. Theorem \ref{t:mainres1} in the current paper provides stability condition under probabilistic scheduling when the shared communication network has a limited capacity and is prone to data losses.
        \end{remark}
        
        \begin{remark}
        \label{rem:compa3}
             The mathematical apparatus that allows us to arrive at \cite[Theorem 1]{abc} and Theorem \ref{t:mainres1} in the current paper is a Markovian jump linear system representation of the individual plants in an NCS. The stability conditions of \cite{abc} and the current paper differ in the definition of \(\P^i\) --- it does not include the data loss probability, \(q\), in \cite{abc}. This is obtained from the difference between the Markovian jump linear systems model of the plant dynamics under limited bandwidth \emph{but} no data loss and the model of the plant dynamics under limited bandwidth \emph{and} data losses. In both the cases off-the-shelf stability conditions for Markovian jump linear systems suffice for the analysis. 
        \end{remark}
        
        \begin{proof}[Proof of Theorem \ref{t:mainres1} (Sketch)]
            Fix a scheduling logic, \(\gamma\), obtained from Algorithm \ref{algo:sched_design}. Fix \(j\in\{1,2,\ldots,v\}\) and \(i\in s_j\). By the properties of \(s_j\), \(j=1,2,\ldots,v\), the plant \(i\) does not appear in any \(s_k\), \(k\in\{1,2,\ldots,v\}\setminus\{j\}\). The plant \(i\) under \(\gamma\) can be modelled as a switched linear system
            \begin{align*}
                x_i(t+1) = A_{\sigma_i(t)}x_i(t),\:\sigma_i(t)\in\{\is,\iu\}
            \end{align*}
            whose set of subsystems is \(\{\is,\iu\}\) and the transition function, \(\sigma_i:\N_0\to\{\is,\iu\}\), satisfies \(\sigma_i(t) = \is\), if \(i\in\gamma(t)\) and \(\kappa_m = 0\) where \(i\) is the \(m\)-th element of \(\gamma\) and \(\sigma_i(t) = \iu\), if \(i\notin\gamma(t)\) or \(i\in\gamma(t)\) and \(\kappa_m = 1\) where \(i\) is the \(m\)-th element of \(\gamma\), \(m\in\{1,2,\ldots,M\}\). Clearly, \(\sigma_i\) is a Markov chain, defined on \((\Omega,\mathcal{F},\PP)\), taking values in \(\{\is,\iu\}\) with transition probability matrix
            \begin{align*}
                \Pi_i = \pmat{\pi_{\is\is} & \pi_{\is\iu}\\\pi_{\iu\is} & \pi_{\iu\iu}}, 
            \end{align*}
            where
            \begin{align*}
            \begin{aligned}
                \pi_{\is\is} &= \PP\bigl(\sigma_i(t+1)=\is\:|\:\sigma_i(t)=\is\bigr) = p_j(1-q),\\
                \pi_{\is\iu} &= \PP\bigl(\sigma_i(t+1)=\iu\:|\:\sigma_i(t)=\is\bigr) = (1-p_j)+p_j q,\\
                \pi_{\iu\is} &= \PP\bigl(\sigma_i(t+1)=\is\:|\:\sigma_i(t)=\iu\bigr) = p_j(1-q),\\
                \pi_{\iu\iu} &= \PP\bigl(\sigma_i(t+1)=\iu\:|\:\sigma_i(t)=\iu\bigr) = (1-p_j)+p_j q,
            \end{aligned}
            \end{align*}
            \(i\in s_j\). The assertion of Theorem \ref{t:mainres1} follows under the set of arguments employed in the proof of \cite[Theorem 1]{abc}.
        \end{proof}
        
        \begin{remark}
        \label{rem:compa4}
            Our design of a scheduling logic is through a probabilistic algorithm. A scheduling logic obtained from Algorithm \ref{algo:sched_design} is neither \emph{static} nor \emph{dynamic} (The reader is referred to Section \ref{s:intro} for a description of these terms). Indeed, we do not repeat a finite length allocation scheme or take properties of the plants or other components in the NCS into consideration at every time instant. Our design is, however, close in spirit to static scheduling techniques in the following sense: (a) computation of the sets, \(s_j\), \(j=1,2,\ldots,v\) and the probabilities, \(p_j\), \(j=1,2,\ldots,v\) are performed offline prior to execution of the scheduling algorithm, and (b) a logic, \(\gamma\), obtained from Algorithm \ref{algo:sched_design} does not adapt to unforeseen changes/faults in the plants or other components in the NCS.
        \end{remark}
        
        In the absence of data losses, design of a scheduling logic, \(\gamma\), that ensures stochastic stability of each plant in an NCS, follows as a special case of Theorem \ref{t:mainres1}.
        
        \begin{corollary}
        \label{cor:mainres2}
            Consider an NCS described in Section \ref{s:prob_stat}. Let the plant dynamics, \((A_i,B_i)\), \(i=1,2,\ldots,N\), the controller dynamics, \(K_i\), \(i=1,2,\ldots,N\), and the capacity of the communication network, \(M\), be given. Let the probability of data loss, \(p=0\). Each plant \(i=1,2,\ldots,N\) is stochastically stable under a scheduling logic, \(\gamma\), obtained from Algorithm \ref{algo:sched_design} if and only if for each \(i\in s_j\), \(j=1,2,\ldots,v\), there exist symmetric and positive definite matrices, \(P_k\in\R^{d_i\times d_i}\), \(k=i_s,i_u\), such that 
                \begin{align}
                \label{e:maincondn2}
                    A_k^\top \tilde{\P}^{i}A_k - P_k \prec 0_{d_i\times d_i},
                \end{align}    
                where \(\tilde{\P}^i = p_j P_{i_s}+(1-p_j)P_{i_u}\).
        \end{corollary}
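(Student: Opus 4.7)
The plan is to obtain the corollary as an immediate specialization of Theorem \ref{t:mainres1} by setting the data loss probability to zero. First I would note that the NCS considered in the corollary is exactly the one analyzed in Theorem \ref{t:mainres1}, with the sole additional stipulation that \(q = 0\) (the symbols \(p\) in the corollary and \(q\) in the theorem both denote the probability of data loss, and Algorithm \ref{algo:sched_design} is unaffected by \(q\)). Since the hypotheses of Theorem \ref{t:mainres1} are met for every admissible value of \(q\in[0,1]\), in particular for \(q=0\), the stability characterization provided there applies verbatim to the setting of the corollary.

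Next I would verify the algebraic reduction. Evaluating the matrix \(\P^i\) appearing in \eqref{e:maincondn1} at \(q = 0\) gives
\begin{align*}
\P^i \;=\; p_j(1-0)P_{i_s} + \bigl((1-p_j) + p_j \cdot 0\bigr)P_{i_u} \;=\; p_j P_{i_s} + (1-p_j)P_{i_u},
\end{align*}
which coincides precisely with the matrix \(\tilde{\P}^i\) introduced in the statement of the corollary. Consequently, the inequality \eqref{e:maincondn1} for the plant index \(i \in s_j\) becomes \(A_k^\top \tilde{\P}^i A_k - P_k \prec 0_{d_i\times d_i}\) for \(k \in \{i_s, i_u\}\), which is exactly \eqref{e:maincondn2}.

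Since Theorem \ref{t:mainres1} provides both necessity and sufficiency, the ``if and only if'' structure carries over without modification, and the corollary follows. There is no genuine obstacle here: the argument is a one-line specialization, and the only thing to be careful about is bookkeeping between the symbols \(p\) and \(q\) across the two statements, and a brief remark that the Markov chain \(\sigma_i\) underlying the proof of Theorem \ref{t:mainres1} simply reduces at \(q=0\) to the chain used in the limited-bandwidth, no-data-loss analysis of \cite{abc} (cf.\ Remark \ref{rem:compa3}), so no separate argument is required.
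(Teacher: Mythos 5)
Your proposal is correct and matches the paper's own proof: both simply substitute \(q=0\) into Theorem \ref{t:mainres1}, observe that \(\P^i\) reduces to \(\tilde{\P}^i = p_j P_{i_s}+(1-p_j)P_{i_u}\), and let the two-sided (if and only if) conclusion carry over directly. Your added remark clarifying the \(p\) versus \(q\) notational slip in the statement is a small bonus, not a deviation.
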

        
        \begin{proof}
            Since \(q=0\), 
            \begin{align*}
                \tilde\P^i &= p_j P_{\is}+(1-p_j) P_{\iu}\\
                &= p_j (1-0) P_{\is} + \bigl((1-p_j)+0\bigr) P_{\iu}\\
                &= p_j(1-q)P_{\is} + \bigl((1-p_j)+p_j q\bigr) P_{\iu}.
            \end{align*}
            The assertion of Corollary \ref{cor:mainres2} follows directly from Theorem \ref{t:mainres1}.
        \end{proof}
          
        \begin{remark}
        \label{rem:compa5}
            Notice that Corollary \ref{cor:mainres2} is more general than \cite[Theorem 1]{abc}. Indeed, Corollary \ref{cor:mainres2} does not require the capacity of the communication network, \(M\), to divide the number of plants, \(N\), (i.e., \(N\% M=0\)) and hence caters to a larger class of NCSs. 
        \end{remark}
        
        Notice that while we do not have control over the plant dynamics, \((A_i,B_i)\), \(i=1,2,\ldots,N\), the capacity of the communication network, \(M\), and the probability of data loss, \(q\), there is an element of choice associated to the sets, \(s_j\), \(j=1,2,\ldots,v\), the probabilities, \(p_j\), \(j=1,2,\ldots,v\) and the controller dynamics, \(K_i\), \(i=1,2,\ldots,N\). Indeed, given \((A_i,B_i)\), \(i=1,2,\ldots,N\), \(M\) and \(q\), one would ideally like to ``co-design'' \(s_j\), \(j=1,2,\ldots,v\), \(p_j\), \(j=1,2,\ldots,v\) and \(K_i\), \(i=1,2,\ldots,N\) such that conditions \ref{c1}-\ref{c4} and \eqref{e:maincondn1} hold. However, this co-design is a numerically difficult problem. A way to solve it is by performing an exhaustive search for \(K_i\), \(i=1,2,\ldots,N\) over all choices of \(s_j\) $ \subseteq \{1,2, \ldots, N\} $, \(j=1,2,\ldots,v\) satisfying conditions \ref{c1}-\ref{c3} and \(p_j\in]0,1[\), \(j=1,2,\ldots,v\) satisfying condition \ref{c4} until a satisfactory combination is obtained. This motivates our Problem \ref{prob:main2}. 
        
        Given the sets, \(s_j\), \(j=1,2,\ldots,v\) and the probabilities, \(p_j\), \(j=1,2,\ldots,v\), the following result allows us to design state-feedback controllers, \(K_i\), \(i=1,2,\ldots,N\).
        \begin{theorem}
        \label{t:mainres4}
            Consider an NCS described in Section \ref{s:prob_stat}. Let the plant dynamics, \((A_i,B_i)\), \(i=1,2,\ldots,N\), the sets, \(s_j\), \(j=1,2,\ldots,v\) and the probabilities, \(p_j\), \(j=1,2,\ldots,v\) be given. Suppose that the state-feedback controllers, \(K_i\), \(i=1,2,\ldots,N\), are computed as 
            \begin{align}
            \label{e:maincondn4}
                K_i = Y_i P_{i_s},\:\:i=1,2,\ldots,N,
            \end{align}
            where \(Y_i\in\R^{m_i\times d_i}\) is a solution to 
            \begin{align}
            \label{e:auxcondn1}
                \bigl(A_i P_{i_s}^{-1}+B_i Y_i\bigr)^\top (\P_i)^{-1} \bigl(A_i P_{i_s}^{-1}+B_i Y_i\bigr) - P_{i_s}^{-1} \prec 0_{d_i\times d_i},
            \end{align}
            where \(P_{i_s}\), \(P_{i_u}\in\R^{d_i\times d_i}\) are symmetric and positive definite matrices satisfying 
            \begin{align}
            \label{e:auxcondn2}
                A_{i_u}^\top \P^i A_{i_u} - P_{i_u} \prec 0_{d_i\times d_i},
            \end{align}
            and \(\P^i = p_j(1-q)P_{i_s}+\bigl((1-p_j)+p_j q\bigr)P_{i_u}\). Then condition \eqref{e:maincondn1} holds.
        \end{theorem}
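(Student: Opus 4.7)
The plan is to verify inequality \eqref{e:maincondn1} for each $i\in\{1,2,\ldots,N\}$ and both values of $k$, namely $k=i_u$ and $k=i_s$, by handling the two modes separately. The $i_u$ case will be immediate from the hypothesis, while the $i_s$ case will be obtained by a congruence transformation that exploits the specific factorization of $K_i$ prescribed by \eqref{e:maincondn4}.

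For the open-loop mode $k=i_u$, since $A_{i_u}=A_i$ by definition, the statement of \eqref{e:maincondn1} is literally the assumed condition \eqref{e:auxcondn2}. So this case holds without further work. For the closed-loop mode $k=i_s$, my strategy is to rewrite \eqref{e:auxcondn1} in terms of $A_{i_s}$ and then apply a similarity/congruence argument. Using $K_i = Y_i P_{i_s}$ from \eqref{e:maincondn4}, the closed-loop matrix admits the factorization
\begin{align*}
A_{i_s} \;=\; A_i + B_i K_i \;=\; A_i + B_i Y_i P_{i_s} \;=\; \bigl(A_i P_{i_s}^{-1} + B_i Y_i\bigr) P_{i_s},
\end{align*}
so the matrix $A_i P_{i_s}^{-1}+B_i Y_i$ appearing inside \eqref{e:auxcondn1} is precisely $A_{i_s} P_{i_s}^{-1}$. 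Next I would pre- and post-multiply \eqref{e:auxcondn1} by the symmetric positive definite matrix $P_{i_s}$. Since $P_{i_s}\succ 0$, this congruence preserves the strict inequality, and after simplification (using $P_{i_s}\cdot P_{i_s}^{-1}\cdot P_{i_s}=P_{i_s}$ on the right-hand term) one obtains $A_{i_s}^\top \P^i A_{i_s} - P_{i_s}\prec 0_{d_i\times d_i}$, which is \eqref{e:maincondn1} for $k=i_s$.

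The whole argument is a short algebraic manipulation; the main obstacle is spotting the correct factorization of $A_{i_s}$ enabled by the construction $K_i=Y_i P_{i_s}$ and identifying $P_{i_s}$ as the correct congruence factor that absorbs the inverses in \eqref{e:auxcondn1}. No additional stability theory is needed, since Theorem \ref{t:mainres1} already provides the bridge from \eqref{e:maincondn1} to stochastic stability of each plant. The role of \eqref{e:auxcondn2} is only to certify the existence of $P_{i_u}$ (jointly with $P_{i_s}$) so that \eqref{e:auxcondn1} is meaningful to set up in the first place.
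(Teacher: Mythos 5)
Your two-case strategy is exactly the intended one: the paper itself contains no written proof of Theorem \ref{t:mainres4} (Remark \ref{rem:compa6} delegates it to \cite[Theorem 2]{abc}, ``modifying the definition of \(\P^i\)''), and the delegated argument is precisely your verification --- \eqref{e:auxcondn2} is verbatim \eqref{e:maincondn1} for \(k=i_u\), and for \(k=i_s\) one uses the factorization \(A_i P_{i_s}^{-1}+B_i Y_i=(A_i+B_iK_i)P_{i_s}^{-1}=A_{i_s}P_{i_s}^{-1}\), valid by \(K_i=Y_iP_{i_s}\), followed by a congruence with \(P_{i_s}\). (One small point: your closing sentence undersells \eqref{e:auxcondn2}; it is not merely a feasibility certificate for setting up \eqref{e:auxcondn1}, it is itself the \(k=i_u\) half of the conclusion, exactly as you used it earlier.)

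There is, however, a concrete slip in your \(k=i_s\) step. The middle factor of \eqref{e:auxcondn1} as printed is \((\P_i)^{-1}\), the \emph{inverse} of \(\P^i\). The congruence with \(P_{i_s}\) absorbs only the two outer factors \(P_{i_s}^{-1}\); it cannot touch the inner factor, so what you actually obtain is
\begin{align*}
A_{i_s}^\top (\P^i)^{-1} A_{i_s} - P_{i_s} \prec 0_{d_i\times d_i},
\end{align*}
not \(A_{i_s}^\top \P^i A_{i_s} - P_{i_s} \prec 0_{d_i\times d_i}\); you silently replaced \((\P^i)^{-1}\) by \(\P^i\). The two are not interchangeable, and the inverse version does not imply \eqref{e:maincondn1}: in the scalar case take \(p_j(1-q)=0.5\), \(P_{i_s}=1\), \(P_{i_u}=2\), \(A_{i_u}=1.1\), so that \(\P^i=1.5\) and \eqref{e:auxcondn2} holds (\(1.21\times 1.5=1.815<2\)); the inverse condition only demands \(A_{i_s}^2<1.5\), while \eqref{e:maincondn1} for \(k=i_s\) demands \(A_{i_s}^2<2/3\), and choosing \(A_{i_s}^2=0.9\) (achievable by a suitable \(Y_i\)) even destroys mean-square stability, since \(0.5\times 0.9+0.5\times 1.21=1.055>1\). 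The resolution is almost certainly that \eqref{e:auxcondn1} carries a typo inherited from the Schur-complement form of the standard Markovian-jump synthesis inequality, in which \((\P^i)^{-1}\) appears as a diagonal block but the equivalent quadratic form has \(\P^i\), un-inverted, in the middle; with that correction your computation is exactly right and completes the proof. As submitted, though, the passage from \eqref{e:auxcondn1} to \eqref{e:maincondn1} for \(k=i_s\) either requires you to flag and fix this discrepancy or it fails at the congruence step.
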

        
        In Theorem \ref{t:mainres4} the state-feedback controllers, \(K_i\), \(i=1,2,\ldots,N\) are designed by involving the matrices, \(Y_i\) and \(P_{i_s}\), \(i=1,2,\ldots,N\). These matrices are solutions to the inequalities \eqref{e:auxcondn1} and \eqref{e:auxcondn2}, respectively both of which involve the plant dynamics. Notice that we consider the sets \(s_j\), \(j\in\{1,2,\ldots,v\}\) and their corresponding probabilities \(p_j\), \(j\in\{1,2,\ldots,v\}\) to be fixed a priori. Which \(s_j\), \(j\in\{1,2,\ldots,v\}\) a particular plant \(i\in\{1,2,\ldots,N\}\) belongs to and with what probability \(p_j\), \(j\in\{1,2,\ldots,v\}\) this plant is selected, plays a role in conditions \eqref{e:auxcondn1} and \eqref{e:auxcondn2}. Indeed, the matrix \(\P^i\) involves the probabilities \(p_j\), \(j=1,2,\ldots,v\). 
        
        The design of state-feedback controllers, \(K_i\), \(i=1,2,\ldots,N\) presented in \cite[Theorem 2]{abc} is a special case of Theorem \ref{t:mainres4} above when the probability of data loss, \(q=0\). Notice that since in the design of state-feedback controllers, the sets \(s_j\), \(j=1,2,\ldots,v\) and the probabilities \(p_j\), \(j=1,2,\ldots,v\) are considered to be fixed a priori, the \(N\% M = 0\) dependent choice of \(s_j\) does not lose generality unlike Theorem \ref{t:mainres1} and falls directly as a special case.
        
        \begin{corollary}
        \label{cor:mainres5}
            Consider an NCS described in Section \ref{s:prob_stat}. Let the probability of data loss, \(q=0\). Let the plant dynamics, \((A_i,B_i)\), \(i=1,2,\ldots,N\), the sets, \(s_j\), \(j=1,2,\ldots,v\) and the probabilities, \(p_j\), \(j=1,2,\ldots,v\) be given. Suppose that the state-feedback controllers, \(K_i\), \(i=1,2,\ldots,N\), are computed as 
            \begin{align}
            \label{e:maincondn5}
                K_i = Y_i P_{i_s},\:\:i=1,2,\ldots,N,
            \end{align}
            where \(Y_i\in\R^{m_i\times d_i}\) is a solution to 
            \begin{align}
            \label{e:auxcondn3}
                \bigl(A_i P_{i_s}^{-1}+B_i Y_i\bigr)^\top (\tilde{\P}_i)^{-1} \bigl(A_i P_{i_s}^{-1}+B_i Y_i\bigr) - P_{i_s}^{-1} \prec 0_{d_i\times d_i},
            \end{align}
            \(P_{i_s}\), \(P_{i_u}\in\R^{d_i\times d_i}\) are symmetric and positive definite matrices satisfying 
            \begin{align}
            \label{e:auxcondn4}
                A_{i_u}^\top \tilde{\P}^i A_{i_u} - P_{i_u} \prec 0_{d_i\times d_i},
            \end{align}
            and \(\tilde{\P}^i = p_jP_{i_s}+(1-p_j)P_{i_u}\). Then condition \eqref{e:maincondn1} holds.
        \end{corollary}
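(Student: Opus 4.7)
The plan is to derive Corollary \ref{cor:mainres5} as the immediate specialization of Theorem \ref{t:mainres4} to the no-data-loss regime, in precisely the same spirit that Corollary \ref{cor:mainres2} was obtained from Theorem \ref{t:mainres1}. First I would substitute $q=0$ into the matrix $\P^i = p_j(1-q)P_{i_s} + \bigl((1-p_j) + p_j q\bigr)P_{i_u}$ that Theorem \ref{t:mainres4} employs in both \eqref{e:auxcondn1} and \eqref{e:auxcondn2}. A one-line computation then yields
\[
\P^i \;=\; p_j P_{i_s} + (1-p_j)\,P_{i_u} \;=\; \tilde{\P}^i,
\]
which matches exactly the expression appearing in the hypotheses of the corollary.

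With $\P^i$ and $\tilde{\P}^i$ identified, the second step is to observe that inequality \eqref{e:auxcondn1} reduces verbatim to \eqref{e:auxcondn3}, and \eqref{e:auxcondn2} reduces to \eqref{e:auxcondn4}. Hence any $P_{i_s}$, $P_{i_u}$ and $Y_i$ that satisfy the hypotheses of Corollary \ref{cor:mainres5} automatically satisfy the hypotheses of Theorem \ref{t:mainres4} in the case $q=0$. Invoking Theorem \ref{t:mainres4} with the controllers $K_i = Y_i P_{i_s}$ defined by \eqref{e:maincondn5} (which is identical to \eqref{e:maincondn4}) and using the substitution $\P^i = \tilde{\P}^i$ once more in its conclusion then delivers the claimed inequality, completing the argument.

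The only care required is bookkeeping: verifying that the sets $s_j$, the probabilities $p_j$, and the assignment of each plant $i$ to its containing $s_j$ play identical roles in the two statements, and that no extra hypothesis is implicit in the passage $q \to 0$. Since this is a direct substitution --- there is no new analysis, no fresh Markovian jump linear system manipulation, and no additional Lyapunov construction --- there is essentially no obstacle, and the proof will be a two-line remark that mirrors the structure of the proof of Corollary \ref{cor:mainres2} already given in the paper.
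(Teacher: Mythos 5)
Your proposal is correct and follows essentially the same route as the paper: the paper's proof of Corollary \ref{cor:mainres5} simply invokes the argument of Corollary \ref{cor:mainres2}, namely substituting \(q=0\) into \(\P^i = p_j(1-q)P_{i_s}+\bigl((1-p_j)+p_jq\bigr)P_{i_u}\) to obtain \(\tilde{\P}^i\) and then applying Theorem \ref{t:mainres4}, which is exactly what you spell out in more detail.
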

        
       \begin{proof}
            Follows immediately from the argument employed in our proof of Corollary \ref{cor:mainres2}.
       \end{proof}
        
        \begin{remark}
        \label{rem:compa6}
            The design procedure of state-feedback controllers, \(K_i\), \(i=1,2,\ldots,N\), in Theorem \ref{t:mainres4} follows from off-the-shelf design of state-feedback controllers for Markovian jump linear systems. In particular, it is an extension of off-the-shelf results to the setting of stability of \(N\) plants simultaneously. We skip the proof of Theorem \ref{t:mainres4} as it follows directly from the proof of \cite[Theorem 2]{abc} by modifying the definition of \(\P^i\).
        \end{remark}
        
        \begin{remark}
        \label{rem:solve}
            Notice that the sets of matrix inequalities involved in Theorems \ref{t:mainres1} and \ref{t:mainres4} can be solved by employing standard matrix inequalities solvers (e.g., LMI solver toolbox and PENBMI in MATLAB). 
        \end{remark}
             
        We now present numerical examples to demonstrate Theorems \ref{t:mainres1} and \ref{t:mainres4}.   
\section{Numerical example}
\label{s:numex}
    \begin{example}
    \label{ex:numex1}
     We perform the numerical experiment \cite[Experiment 1]{abc} in the presence of data losses. We consider an NCS with number of plants, \(N = 2\) and capacity of the shared communication network, \(M = 1\). Let the probability of data loss, \(q = 0.3\). 
     
     Plant \(i=1\) is a discretized version of a linearized batch reactor system presented in \cite[\S IVA]{Walsh2002} with sampling time \(0.05\) units of time. We have
    \begin{align*}
        A_1 &= \pmat{1.0795 & -0.0045 & 0.2896 & -0.2367\\-0.0272 & 0.8101 & -0.0032 & 0.0323\\
        0.0447 & 0.1886 & 0.7317 & 0.2354\\0.0010 & 0.1888 & 0.0545 & 0.9115},\\
        B_1 &= \pmat{0.0006 & -0.0239\\0.2567 & 0.0002\\0.0837 & -0.1346\\0.0837 & -0.0046}.
    \end{align*}
    Plant \(i=2\) is a discretized version of a linearized inverted pendulum system presented in \cite[\S 4]{Rehbinder2004} with sampling time \(0.05\) units of time. We have
    \begin{align*}
        A_2 = \pmat{1.0123 & 0.0502\\0.4920 & 1.0123},\:\:&\:\:B_2 = \pmat{0.0123\\0.4920}.
    \end{align*}
    
    Let \(s_1 = \{1\}\), \(s_2 = \{2\}\), \(p_1 = 0.6\) and \(p_2 = 0.4\). Clearly, \(\abs{s_1} = \abs{s_2} = 1 = M\), \(s_1\cap s_2 = \emptyset\), \(s_1\cup s_2 = \{1,2\}\) and \(p_1+p_2 = 1\). 
    
    First, we design state-feedback controllers, \(K_i\), \(i=1,2\) so that condition \eqref{e:maincondn1} is satisfied. We employ the following steps to perform this task:
    \begin{itemize}[label = \(\circ\),leftmargin = *]
        \item Plant 1:
        \begin{itemize}[label = \(\diamond\), leftmargin = *]
            \item We solve \eqref{e:auxcondn2} and obtain 
            \begin{align*}
                P_{1_s} &= \pmat{280.4127 & -2.6153 & 73.7255 & 24.1771\\
                -2.6153 & 352.0158 & 81.6804 & -15.6265\\
                73.7255 & 81.6804 & 145.2213 & -53.0460\\
                24.1771 & -15.6265 & -53.0460 & 581.9139},
                \intertext{and}
                P_{1_u} &= \pmat{638.0848 & 3.1240 & 374.2750 & -71.3485\\
                3.1240 & 582.2429 & 175.6108 & 193.2678\\
                374.2750 & 175.6108 & 357.4983 & -44.0733\\
                -71.3485 & 193.2678 & -44.0733 & 687.8599}. 
            \end{align*}
            \item We solve \eqref{e:auxcondn1} and obtain
            \begin{align*}
                Y_1 = \pmat{0.0013 & -0.0054 & 0.0010 & -0.0040\\0.0141 & -0.0023 & 0.0121 & -0.0064}.
            \end{align*}
            \item We compute \(K_1\) using \eqref{e:maincondn4} and obtain
            \begin{align*}
                K_1 = \pmat{0.9913 & -2.9132 & 0.0462 & -3.2634\\10.0804 & -0.1278 & 6.7379 & -5.3809}.
            \end{align*}
        \end{itemize}
        \item Plant 2:
         \begin{itemize}[label = \(\diamond\), leftmargin = *]
            \item We solve \eqref{e:auxcondn2} and obtain 
            \begin{align*}
                P_{2_s} &= \pmat{46.7052 & 1.1377\\1.1377 & 1.0283},
                \intertext{and}
                P_{2_u} &= \pmat{589.7951 & 180.4922\\180.4922 & 59.6479}. 
            \end{align*}
            \item We solve \eqref{e:auxcondn1} and obtain
            \begin{align*}
                Y_2 = \pmat{-0.0175 & 0.0016}.
            \end{align*}
            \item We compute \(K_1\) using \eqref{e:maincondn4} and obtain
            \begin{align*}
                K_2 = \pmat{-7.4787 -2.2188}.
            \end{align*}
        \end{itemize}
    \end{itemize} 
    
    It follows that
    \begin{align*}
        &A_{1_s}^\top \P^1 A_{1_s} - P_{1_s}\\
        =&\: \pmat{-88.9058 & -1.8199 & -42.9043 & -3.6966\\
        -1.8199 & -349.3428 & -83.4172 & -16.4636\\
        -42.9043 & -83.4172 & -139.8538 & 68.9249\\
        -3.6966 & -16.4636 & 68.9249 & -190.8335}\\
        \prec&\: 0_{4\times 4}, 
    \end{align*}
    \begin{align*}
        &A_{1_u}^\top \P^1 A_{1_u} - P_{1_u}\\
        =&\: \pmat{-45.1850 & 34.2637 & -18.6622 & -25.8573\\
        35.2637 & -161.0010 & -44.5344 & 27.6514\\
        -18.6622 & -44.5344 & -71.3559 & 25.3495\\
        -25.8573 & 27.6514 & 25.3495 & -137.0475}\\
        \prec&\: 0_{4\times 4},
    \end{align*}
    and
    \begin{align*}
        A_{2_s}^\top \P^2 A_{2_s} - P_{2_s}
        =\pmat{-1 & 0\\0 & -1}
        \prec 0_{2\times 2}, 
    \end{align*}
    \begin{align*}
        A_{2_u}^\top \P^2 A_{2_u} - P_{2_u}
        =\pmat{-1 & 0\\0 & -1}
        \prec 0_{2\times 2}.
    \end{align*}
    Thus, condition \eqref{e:maincondn1} holds for each \(i\in s_j\), \(j=1,2\).
    
    Second, we employ Algorithm \ref{algo:sched_design} and generate \(10\) different scheduling logics, \(\gamma\). We use the following numerical method to generate each logic:
    \begin{itemize}[label = \(\circ\), leftmargin = *]
        \item We fix \(T= 1000\) and set the frequency of occurrence of \(s_j\), \(j=1,2\) as \(f_{s_1} = p_1\times T = 600\) and \(f_{s_2} = p_2\times T = 400\).
        \item We construct a set \(D_1\) that contains \(f_{s_j}\)-many instances of \(s_j\), \(j=1,2\).
        \item At each time \(t=0,1,\ldots,T-1\), we pick an element \(r\) from \(D_1\) uniformly at random, assign it to \(\gamma(t)\), and update \(D_1\) to \(D_1\setminus\{r\}\). 
    \end{itemize} 
    
    Third, we generate \(10\) different data loss sequences, \(\kappa_1\). The following procedure is employed to generate each of them:
      \begin{itemize}[label = \(\circ\), leftmargin = *]
        \item We fix \(T=1000\) and set the frequency of occurrence of \(1's\) as \(f = q\times T = 300\).
        \item We construct a set \(D_2\) that contains \(f\)-many instances of \(1\)'s and the remaining elements are \(0\).
        \item At each time \(t=0,1,\ldots,T-1\), we pick an element \(\overline{r}\) from \(D_2\) uniformly at random, assign it to \(\kappa_1(t)\), and update \(D_2\) to \(D_2\setminus\{\overline r\}\).
      \end{itemize}
      
    Fourth, we pair a scheduling logic and a data loss sequence from the generated ones uniformly at random and obtain \(10\) such combinations. Corresponding to each combination, we pick \(10\) different initial conditions, \(x^{i}_0\in[-1,+1]^{d_i}\), \(i=1,2\) uniformly at random and plot \(\norm{x_i(t)}^2\), \(i=1,2\). The trajectories up to time \(t = 40\) are illustrated in Figures \ref{fig:plot1} and \ref{fig:plot2}. Stochastic stability of each plant in the NCS under consideration follows. 
    
    \begin{figure}[htbp]
    \centering  
        \includegraphics[height=6cm,width=8cm]{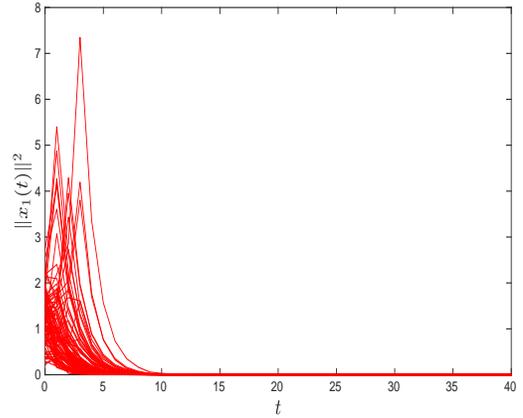}
        \caption{\(\bigl(\norm{x_1(t)}^2\bigr)_{t\geq 0}\) for Plant \(1\) in Example \ref{ex:numex1}}\label{fig:plot1}
    \end{figure}

     \begin{figure}[htbp]
    \centering  
        \includegraphics[height=6cm,width=8cm]{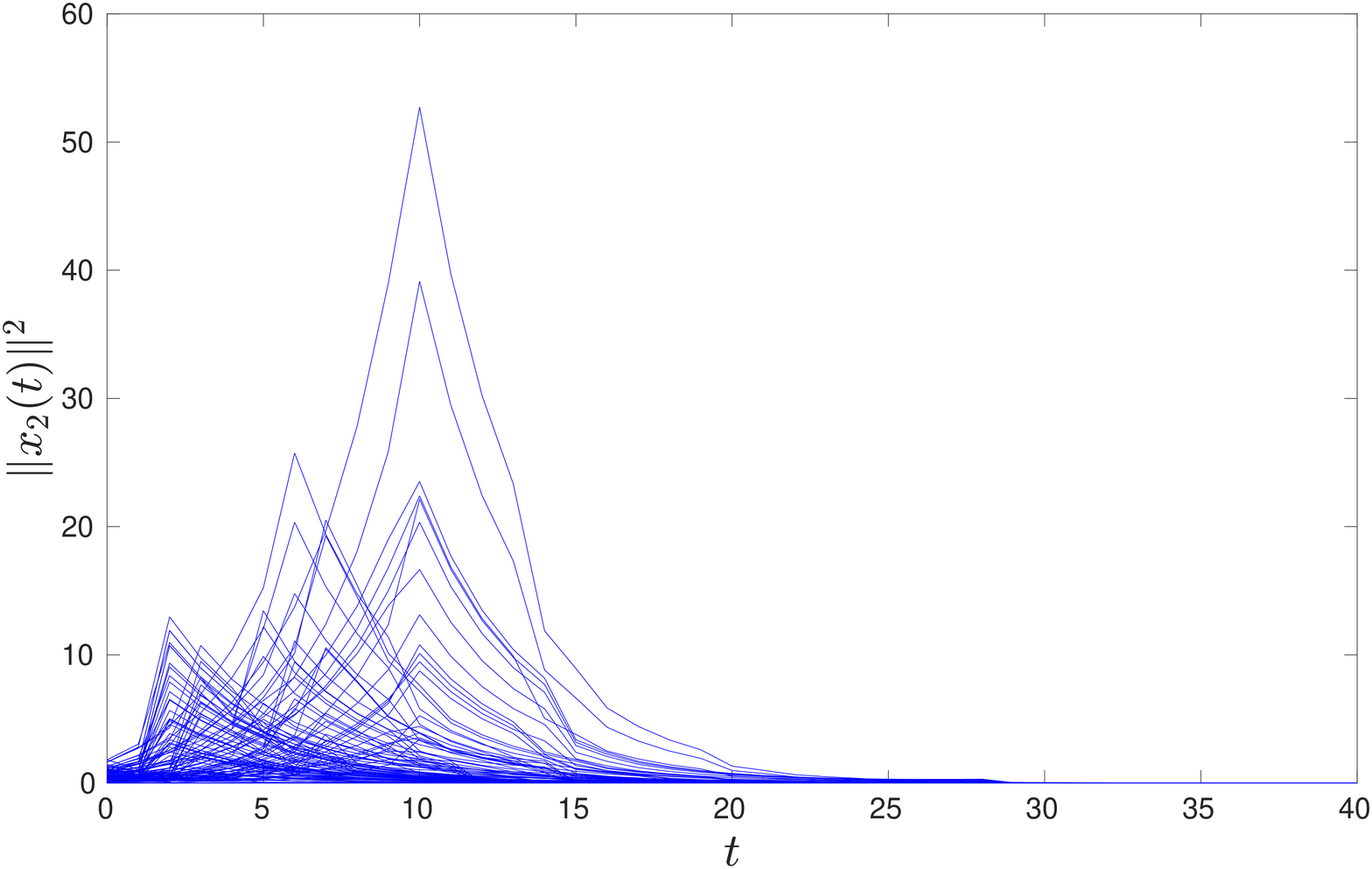}
        \caption{\(\bigl(\norm{x_2(t)}^2\bigr)_{t\geq 0}\) for Plant \(2\) in Example \ref{ex:numex1}}\label{fig:plot2}
    \end{figure}
    \end{example}
    
    \begin{example}
    \label{ex:numex2}
        We consider an NCS with number of plants, \(N=5\) and capacity of the shared communication network, \(M = 2\). Notice that \(N\% M \neq 0\) and the assumptions for the stability results \cite{abc} are not satisfied. However, the results of the current paper are applicable. Let the probability of data loss, \(q = 0.4\). 
        
        We take five copies of the same plant. Let
        \begin{align*}
            A_i = \pmat{1.2571 & -1.0259\\1.7171 & -0.6001},\:\:B_i = \pmat{1\\0},\:i=1,2,3,4,5.
        \end{align*}
        Let \(s_1 = \{1,2\}\), \(s_2 = \{3,4\}\), \(s_3 = \{5\}\), \(p_1 = 0.3\), \(p_2 = 0.3\) and \(p_3 = 0.4\). Clearly, \(\abs{s_1}=\abs{s_2}= M\), \(\abs{s_3}<M\), \(s_1\cap s_2 = \emptyset\), \(s_1\cap s_3 = \emptyset\), \(s_2\cap s_3 = \emptyset\), \(s_1\cup s_2\cup s_3=\{1,2,3,4,5\}\) and \(p_1+p_2+p3=1\).
        
        First, we design state-feedback controllers, \(K_i\), \(i=1,2,3,4,5\) so that condition \eqref{e:maincondn1} is satisfied. We employ the following steps to perform this task:
        \begin{itemize}[label=\(\circ\),leftmargin=*]
            \item We solve \eqref{e:auxcondn2} and obtain
            \begin{align*}
                P_{i_s} &= \pmat{764.8674 & -280.5277\\-280.5277 & 140.1558},\:i=1,2,3,4,5,\\
                \intertext{and}
                P_{i_u} &= \pmat{588.7880 & -339.0439\\-339.0439 & 388.3214},\:i=1,2,3,4,5.
            \end{align*}
            \item We solve \eqref{e:auxcondn1} and obtain
            \begin{align*}
                Y_i = \pmat{0.0044 & 0.0153},\:i=1,2,3,4,5.
            \end{align*}
            \item We compute \(K_i\), \(i=1,2,3,4,5\) using \eqref{e:maincondn4} and obtain
            \begin{align*}
                K_i = \pmat{-0.8994 & 0.9005},\:i=1,2,3,4,5.
            \end{align*}
        \end{itemize}
        It follows that for \(i=1,2,3,4\),
        \begin{align*}
            A_{i_s}^\top \P^i A_{i_s} - P_{i_s} &= \pmat{-75.7909 & 39.8393\\39.8393 & -56.0854}\prec 0_{2\times 2},\\
            A_{i_u}^\top \P^i A_{i_u} - P_{i_u} &= \pmat{-13.2316 & 11.2439\\11.2439 & -16.0168}\prec 0_{2\times 2},
        \end{align*}
        and for \(i=5\),
         \begin{align*}
            A_{i_s}^\top \P^i A_{i_s} - P_{i_s} &= \pmat{-114.0282 & 53.1988\\53.1988 & -60.7530}\prec 0_{2\times 2},\\
            A_{i_u}^\top \P^i A_{i_u} - P_{i_u} &= \pmat{-25.2808 & 4.1286\\4.1286 & -5.9368}\prec 0_{2\times 2}.
        \end{align*}
        Thus, condition \eqref{e:maincondn1} holds for each \(i\in s_j\), \(j=1,2,3\).
        
        Second, we employ Algorithm \ref{algo:sched_design} and generate \(10\) different scheduling logics, \(\gamma\). The numerical method described in Example \ref{ex:numex1} is used for the purpose of sequence generation.
        
        Third, corresponding to each \(\gamma\) generated above, we generate data loss sequences, \(\kappa_m\), \(m=1,2\). In particular, we keep \(\kappa_1 = \kappa_2\). We employ the sequence generation procedure described in Example \ref{ex:numex1} for this purpose.
        
        Fourth, corresponding to each combination of \(\gamma\) and \(\kappa_m\), \(m=1,2\), we pick \(10\) different initial conditions, \(x_i^0\in[-1,+1]^2\), \(i=1,2,3,4,5\) uniformly at random and plot the resulting trajectory \(\norm{x_i(t)}^2\), \(i=1,2,3,4,5\). In Figures \ref{fig:plot3} and \ref{fig:plot4} we illustrate one scheduling logic \(\gamma\), one data loss sequence \(\kappa_m\), \(m=1,2\), respectively. In Figure \ref{fig:plot5} we illustrate the trajectories of the plant obtained from our experiment up to time \(t = 100\). Stochastic stability of each plant in the NCS under consideration follows.
         \begin{figure}[htbp]
    \centering  
        \includegraphics[height=6cm,width=8cm]{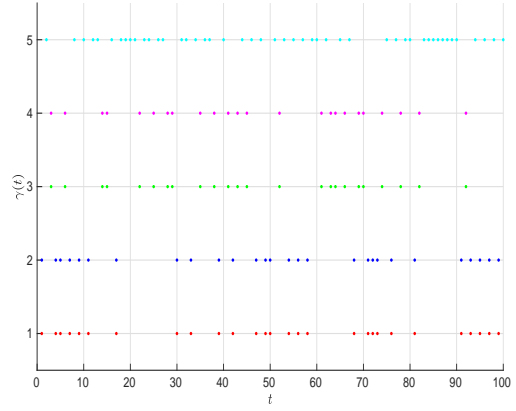}
        \caption{One choice of \(\bigl(\gamma(t)\bigr)_{t\geq 0}\) for Example \ref{ex:numex2}}\label{fig:plot3}
    \end{figure}
     \begin{figure}[htbp]
    \centering  
        \includegraphics[height=6cm,width=8cm]{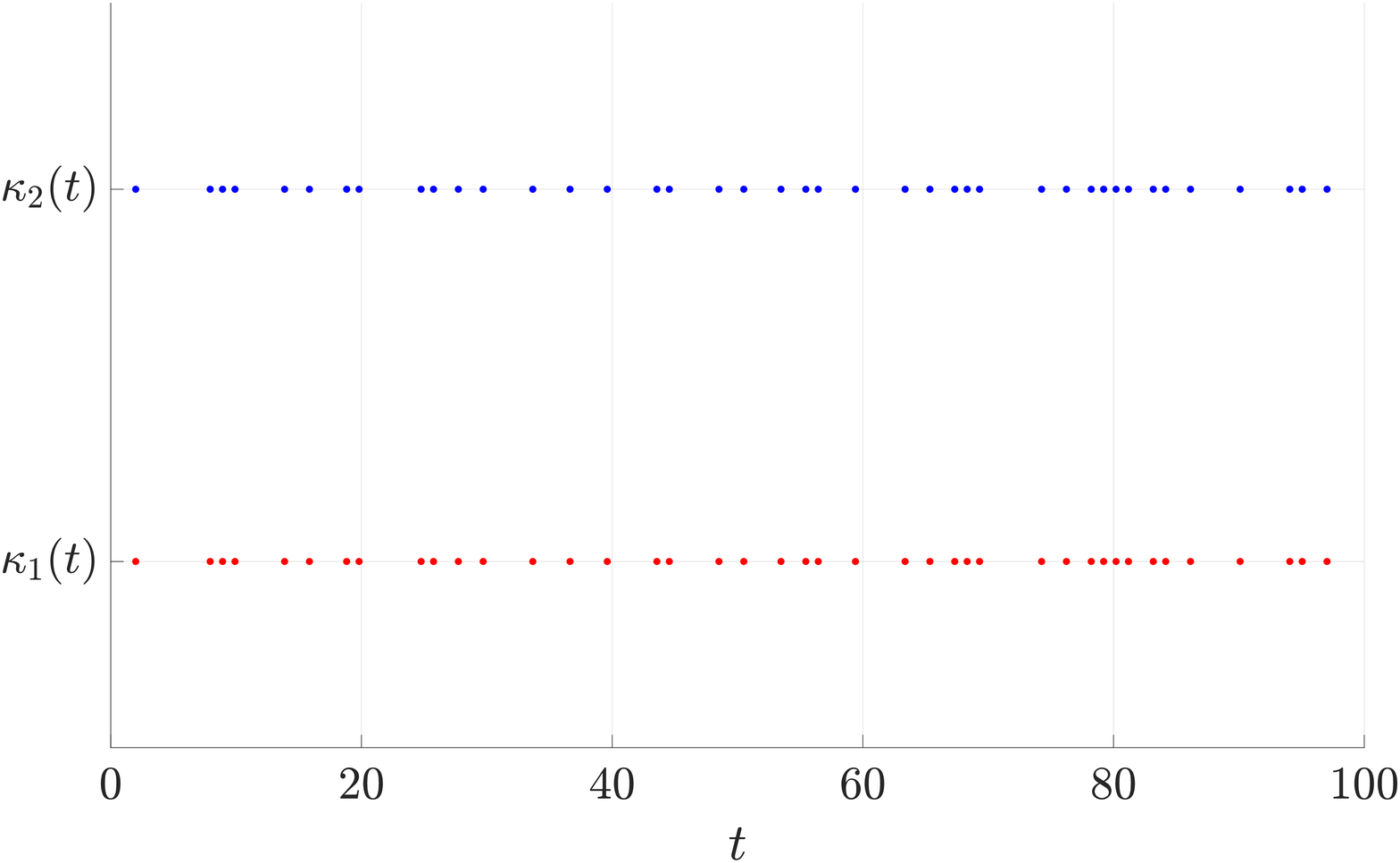}
        \caption{One choice of \(\bigl(\kappa_m(t)\bigr)_{t\geq 0}\), \(m=1,2\) for Example \ref{ex:numex2}}\label{fig:plot4}
    \end{figure}
    \begin{figure}[htbp]
    \centering  
        \includegraphics[height=6cm,width=9cm]{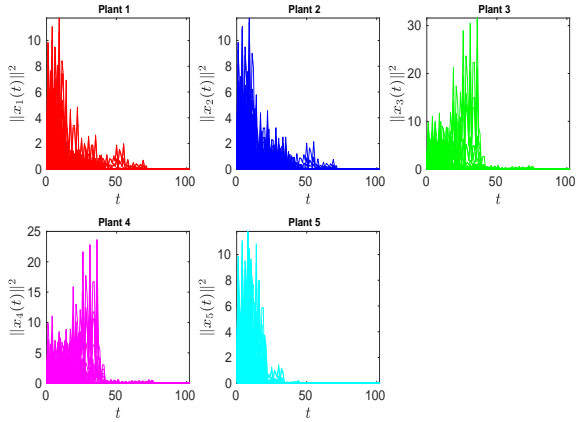}
        \caption{\(\bigl(\norm{x_i(t)}^2\bigr)_{t\geq 0}\), \(i=1,2,3,4,5\) for Example \ref{ex:numex2}}\label{fig:plot5}
    \end{figure}
    \end{example}

\section{Concluding remarks}
\label{s:concln}
    In this paper we discussed a probabilistic algorithm that designs scheduling logics for NCSs whose shared communication networks have a limited communication capacity and are prone to data losses. A scheduling logic obtained from our algorithm preserves stochastic stability of each plant in the NCS. We assumed that the data loss pattern follows a probabilistic model with a known probability. In practice, however, a precise estimate of the data loss probability is often not available. Further, we considered the data loss probability to be the same in all the channels in the network. This assumption is restrictive and may be generalized to different data loss probabilities for different channels in the network. We identify a generalization of our results to the setting where the data loss pattern follows a probabilistic model with different and unknown probabilities in various channels in the network, as a future research direction.

\begin{thebibliography}{10}

\bibitem{Al-Areqi'15}
{\sc S.~Al-Areqi, D.~G\"orges, and S.~Liu}, {\em Event-based control and
  scheduling codesign: stochastic and robust approaches}, IEEE Transactions on
  Automatic Control, 60 (2015), pp.~1291--1303.

\bibitem{Dai2010}
{\sc S.-L. Dai, H.~Lin, and S.~S. Ge}, {\em Scheduling-and-control codesign for
  a collection of networked control systems with uncertain delays}, IEEE
  Transactions on Control Systems Technology, 18 (2010), pp.~66--78.

\bibitem{Gatsis2016}
{\sc K.~Gatsis, A.~Ribeiro, and G.~Pappas}, {\em Control-aware random access
  communication}.
\newblock Proceedings of the 7th ACM/IEEE International Conference on
  Cyber-Physical Systems (ICCPS), 2016, Vienna, Austria, pp. 1-9.

\bibitem{Hespanha2007}
{\sc J.~P. Hespanha, P.~Naghshtabrizi, and Y.~Xu}, {\em A survey of recent
  results in networked control systems}, Proc.~of IEEE \textnormal{Special
  Issue on Technology of Networked Control Systems}, 95 (2007), pp.~138--162.

\bibitem{Hristu2001}
{\sc D.~Hristu-Varsakelis}, {\em Feedback control systems as users of a shared
  network: Communication sequences that guarantee stability}.
\newblock Proceedings of the 40th IEEE Conference on Decision and Control,
  2001, Orlando, Florida, USA, pp. 3631-3636.

\bibitem{def}
{\sc A.~Kundu and D.~E. Quevedo}, {\em Stabilizing scheduling policies for
  networked control systems}, IEEE Trans. Control Netw. Syst., 7 (2020),
  pp.~163--175.

\bibitem{ghi}
\leavevmode\vrule height 2pt depth -1.6pt width 23pt, {\em On periodic
  scheduling and control for networked systems under random data loss}, IEEE
  Trans. Control Netw. Syst., 8 (2021), pp.~1788--1798.

\bibitem{Lin2005}
{\sc H.~Lin, G.~Zhai, L.~Fang, and P.~J. Antsaklis}, {\em Stability and h-inf
  performance preserving scheduling policy for networked control systems}.
\newblock Proc. of the 16th IFAC World Congress, 2005, Prague, Czech Republic.

\bibitem{Quevedo2014}
{\sc M.~Lje\v{s}njanin, D.~E. Quevedo, and D.~Ne\v{s}i\'{c}}, {\em Packetized
  {MPC} with dynamic scheduling constraints and bounded packet dropouts},
  Automatica J. IFAC, 50 (2014), pp.~784--797.

\bibitem{Ma2019}
{\sc Y.~Ma, J.~Guo, Y.~Wang, A.~Chakrabarty, H.~Ahn, P.~Orlik, and C.~Lu}, {\em
  Optimal dynamic scheduling of wireless networked control systems}.
\newblock Proceedings of the 10th ACM/IEEE International Conference on
  Cyber-Physical Systems, 2019, Montreal, Canada, pp. 77-86.

\bibitem{Mishra2018}
{\sc P.~K. Mishra, D.~Chatterjee, and D.~E. Quevedo}, {\em Stabilizing
  stochastic predictive control under {B}ernoulli dropouts}, IEEE Trans.
  Automat. Control, 63 (2018), pp.~1489--1500.

\bibitem{Rehbinder2004}
{\sc H.~Rehbinder and M.~Sanfridson}, {\em Scheduling of a limited
  communication channel for optimal control}, Automatica, 40 (2004),
  pp.~491–--500.

\bibitem{abc}
{\sc M.~Singh and A.~Kundu}, {\em A scheduling algorithm for networked control
  systems}, IFAC-PapersOnLine, 55 (2022), pp.~37--42.
\newblock 9th IFAC Conference on Networked Systems NECSYS 2022.

\bibitem{Walsh2002}
{\sc G.~C. Walsh, H.~Ye, and L.~G. Bushness}, {\em Stability analysis of
  networked control systems}, IEEE Trans. Control Syst. Tech., 10 (2002),
  pp.~438--446.

\end{thebibliography}

\end{document}